\theoremstyle{plain}
\newtheorem{thm}{\protect\theoremname}
\theoremstyle{definition}
\newtheorem{defn}[thm]{\protect\definitionname}
\theoremstyle{definition}
\newtheorem{example}[thm]{\protect\examplename}
\theoremstyle{remark}
\newtheorem{rem}[thm]{\protect\remarkname}
\newenvironment{proof}[1][\protect\proofname]{\par
	\normalfont\topsep6\p@\@plus6\p@\relax
	\trivlist
	\itemindent\parindent
	\item[\hskip\labelsep\scshape #1]\ignorespaces
}{%
	\endtrivlist\@endpefalse
}
\providecommand{\proofname}{Proof}
\theoremstyle{plain}
\newtheorem{lem}[thm]{\protect\lemmaname}
\theoremstyle{plain}
\newtheorem{cor}[thm]{\protect\corollaryname}
\providecommand{\corollaryname}{Corollary}
\providecommand{\definitionname}{Definition}
\providecommand{\examplename}{Example}
\providecommand{\lemmaname}{Lemma}
\providecommand{\remarkname}{Remark}
\providecommand{\theoremname}{Theorem}
\begin{document}

\begin{frontmatter}{}

\title{On the similarity between ranking vectors in the pairwise comparison
method}

\author[agh]{Konrad Ku\l akowski \corref{cor1}}

\ead{konrad.kulakowski@agh.edu.pl}

\author[sba]{Ji\v{r}í Mazurek}

\ead{mazurek@opf.slu.cz}

\author{Micha\l{} Strada}

\ead{michal.strada@wp.pl}

\cortext[cor1]{Corresponding author}

\address[agh]{AGH University of Science and Technology, Poland}

\address[sba]{School of Business Administration in Karvina, Czech Republic}
\begin{abstract}
There are many priority deriving methods for pairwise comparison matrices.
It is known that when these matrices are consistent all these methods
result in the same priority vector. However, when they are inconsistent,
the results may vary. The presented work formulates an estimation
of the difference between priority vectors in the two most popular
ranking methods: the eigenvalue method and the geometric mean method.
The estimation provided refers to the inconsistency of the pairwise
comparison matrix. Theoretical considerations are accompanied by Montecarlo
experiments showing the discrepancy between the values of both methods.
\end{abstract}
\begin{keyword}
pairwise comparisons \sep Analytic Hierarchy Process \sep Eigenvalue
method \sep Geometric Mean Method \sep AHP \sep EVM \sep GMM 
\end{keyword}

\end{frontmatter}{}

\section{Introduction}

Creating a ranking based on comparing alternatives in pairs was already
known in the Middle Ages. Probably the first work on this subject
was by \emph{Ramon Llull} \citep{Colomer2011rlfa}, who described
election procedures based on the mutual comparisons of alternatives.
Over time, more and more studies on the pairwise comparison method
emerged. It is necessary to mention here works on electoral systems,
such as the \emph{Condorcet} method \citep{Condorcet1785eota} and
the \emph{Copeland} method \citep{Saari1996tcm}, and many studies
on the social choice and welfare systems \citep{Suzumura2010hosc}.
At the beginning of the twentieth century, alternatives began to be
compared quantitatively. This was initially associated with the need
to compare psychophysical stimuli \citep{Fechner1860edp,Thurstone1927tmop,Thurstone27aloc}.
It was later developed \citep{David1969tmop} and used in various
forms for different purposes, including economics \citep{Peterson1998evabt},
consumer research \citep{Gacula1984smif}, psychometrics \citep{Price2017pmti},
health care \citep{Kakiashvili2017arbp,Pinkerton1976iocd} and others.
Thanks to \emph{Saaty} and his seminal work in which he defined AHP
(Analytic Hierarchy Process) \citep{Saaty1977asmf}, comparing alternatives
in pairs began to be seen primarily as a multi-criteria decision-making
method. The incredible success of AHP is partly due to the fact that
\emph{Saaty} proposed a complete ready-to-use solution including a
ranking calculation algorithm, an inconsistency index as a method
of determining data quality, and a hierarchical model allowing users
to handle multiple criteria. However, it turned out quite quickly
\citep{Crawford1985anot,Barzilai1997dwfp} that the proposed solutions
can be improved. The research conducted on the pairwise comparison
method has resulted in many priority deriving algorithms \citep{Jablonsky2015aosp,Hefnawy2014rodm}
and inconsistency indices \citep{Brunelli2018aaoi,Kulakowski2019witc,Kulakowski2020iifi}.
Probably the two most popular algorithms for calculating priorities
are EVM (Eigenvalue method proposed by \emph{Saaty}) and GMM (Geometric
mean method devised by \emph{Crawford}) \citep{Saaty1977asmf,Crawford1987tgmp}.
It is easy to verify that, for a consistent matrix, both methods lead
to the same solution. In the case of inconsistent matrices, the resulting
rankings differ from each other. \emph{Herman} and \emph{Koczkodaj}
have proved experimentally that the more consistent the matrices are,
the more similar the rankings \citep{Herman1996amcs}. The presented
article complements their Montecarlo evidence with the analytical
proof of this convergence. It also shows the results of a new Montecarlo
experiment. Thanks to this, the reader receives a method that enables
estimation of how far apart the results of both of the most popular
ranking methods can be: EVM and GMM. The presented work is part of
the discussion on the properties of the pairwise comparison method
and AHP. Despite the large number of publications on this topic \citep{Brunelli2019asot,Mazurek2019snpo,Kulakowski2015otpo,Bozoki2017ewvf},
the AHP method continues to inspire and challenge researchers.

The article is composed of five sections. The current introductory
section precedes Preliminaries, which includes basic concepts and
definitions. The third section is entirely devoted to theoretical
considerations. It contains theorems showing the relationships between
ranking vectors calculated using the EVM and GMM methods with reference
to the inconsistency of the PC matrix. The next section shows the
result of the Montecarlo experiments carried out. A brief summary
and discussions can be found in the last section of the article.

\section{Preliminaries}

When making choices, people make comparisons. When paying for an apple
in a store, we choose the larger one, putting products on the scale,
we compare their weight with the standard of one kilogram, etc. Comparing
alternatives in pairs allows us to create a ranking and choose the
best possible option. In AHP as proposed by Saaty \citep{Saaty1977asmf}
all alternatives are compared to each other. Hence, the ranking data
creates the matrix in which rows and columns correspond to alternatives
and individual elements represent the results of comparisons. Such
matrices are the subjects of priority deriving methods, which transform
them into weight vectors. The i-th component of such a vector corresponds
to the importance of the i-th alternative. The credibility of such
a ranking depends on the consistency of the set of comparisons. It
has been accepted that the more consistent the set of comparisons
is, the more reliably the assessment is made. Below, we will briefly
introduce basic methods for calculating priorities and estimating
inconsistency for matrices containing the results of pairwise comparisons
of alternatives.

\subsection{PC matrices and priority deriving methods }

The input data for the priority deriving method is a PC (pairwise
comparison) matrix $C=[c_{ij}]$ of non-negative elements. The values
of $c_{ij}$ and $c_{ji}$ indicate the relative importance (or preference)
of the i-th and j-th alternatives. We denote alternatives using the
lowercase letter $a$ together with an appropriate integer index.
The set of all alternatives is given as $A=\{a_{1},\ldots,a_{n}\}$.
Because comparing an alternative with itself ends up in a tie, then
the diagonal of the matrix is filled up with ones. Formally, the PC
matrix can be defined as follows.
\begin{defn}
\label{D1} The PC matrix of the order $n$ is an $n\times n$ square
matrix $\mathbf{C}=[c_{ij}]$ given as follows:

\[
\mathbf{C}=\left[\begin{array}{cccc}
1 & c_{12} & ... & c_{1n}\\
c_{21} & 1 & ... & ...\\
... & ... & 1 & ...\\
c_{n1} & ... & ... & 1
\end{array}\right],
\]

where $c_{ij}\in\mathbb{R}_{+}$.
\end{defn}
\begin{defn}
\label{D2}The matrix $\mathbf{C}=[c_{ij}]$ is said to be (multiplicatively)
reciprocal if: 
\begin{equation}
\forall i,j\in\{1,\ldots,n\}:c_{ij}\cdot c_{ji}=1.
\end{equation}
\end{defn}
There are two main methods allowing us to calculate the priority vector
(vector of weights) 
\[
w=\left[w(a_{1}),\ldots,w(a_{n})\right]^{T}
\]
from the PC matrix $\mathbf{C}=[c_{ij}]$. There is the eigenvalue
method (EVM) proposed by \emph{Saaty} \citep{Saaty1977asmf}, and
the geometric mean method (GMM) proposed by \emph{Crawford} \citep{Crawford1987tgmp}.

In EVM, the priority vector is an eigenvector corresponding to the
largest eigenvalue of $\mathbf{C}$:

\begin{equation}
\mathbf{C}\mathbf{w}_{\textit{max}}=\lambda_{\textit{max}}\mathbf{w}_{\textit{max}},
\end{equation}
where $\lambda_{max}\geq n$ is a positive eigenvalue\footnote{The existence of $\lambda_{\textit{max}}$ is guaranteed by the \emph{Perron-Frobenius}
theorem \citep[vol. 2, p. 53]{Gantmaher2000ttom}.}, and $\mathbf{w}_{\textit{max}}$ is the corresponding (right) eigenvector
of $\mathbf{C}$. The priority vector $w$ is given as $w=\gamma\cdot\mathbf{w}_{max}$,
where $\gamma$ is a scaling factor, $\gamma=\left(\sum_{i=1}^{n}w_{\textit{max}}(a_{i})\right)^{-1}$,
so that\footnote{$\Vert\cdot\Vert_{m}$ is the Manhattan norm.} $\Vert w\Vert_{m}=1$.

In GMM, which is, in principle, equivalent to the logarithmic least
squares method \citep{Crawford1987tgmp}, the priority vector $\mathbf{w}$
is derived as the geometric mean of all rows of $\mathbf{C}$:

\begin{equation}
w=\gamma\left[\left(\prod_{r=1}^{n}c_{1r}\right)^{\frac{1}{n}},\ldots,\left(\prod_{r=1}^{n}c_{nr}\right)^{\frac{1}{n}}\right]^{T}\label{D3}
\end{equation}
where $\gamma$ is a scaling factor so that $\Vert w\Vert=1$.

\bigskip{}

\begin{example}
Let us consider the following PC matrix

\begingroup\renewcommand*{\arraystretch}{1.2}

\[
\mathbf{C}=\left[\begin{array}{cccc}
1 & \frac{1}{2} & 2 & 5\\
2 & 1 & 4 & 4\\
\frac{1}{2} & \frac{1}{4} & 1 & 5\\
\frac{1}{5} & \frac{1}{4} & \frac{1}{5} & 1
\end{array}\right],
\]

\endgroup
\end{example}
\bigskip{}

Using the methods defined above, it is easy to verify that the priority
vectors $\mathbf{w_{\textit{EV}}}$ and $\mathbf{w_{\textit{GM}}}$
of $\mathbf{C}$ are $\mathbf{w_{\textit{EV}}}=\left[0.282,0.474,0.179,0.065\right]^{T}$,
$\mathbf{w_{\textit{GM}}}=\left[0.294,0.468,0.175,0.062\right]^{T}$.

In addition to these two main methods of calculating priorities, there
are a number of other solutions. A good starting point for further
literature research are the works \citep{Choo2004acff,Golany1993ameo}.

\subsection{Inconsistency}

When comparing both alternatives, experts try to ensure that the value
obtained corresponds exactly to the ratio between the real priority
of both alternatives, i.e. 
\begin{equation}
c_{ij}=\frac{w(a_{i})}{w(a_{j})}.\label{eq:assumption-of-incons}
\end{equation}
Hence, it is natural to expect that 
\[
c_{ij}=\frac{w(a_{i})}{w(a_{j})}=\frac{w(a_{i})}{w(a_{k})}\cdot\frac{w(a_{k})}{w(a_{j})}=c_{ik}c_{kj}.
\]
If this is not the case, the set of comparisons is inconsistent. 
\begin{defn}
\label{def:consist-def}The PC matrix $\mathbf{C}=[c_{ij}]$ is said
to be (multiplicatively) consistent if: 
\begin{equation}
\forall i,j,k\in\{1,\ldots,n\}:c_{ij}\cdot c_{jk}\cdot c_{ki}=1.
\end{equation}
\end{defn}
It is easy to observe that each consistent PC matrix corresponds exactly
to one weight vector and, reversely, the weight vector induces exactly
one consistent PC matrix.
\begin{defn}
\label{def:consist-matrix-introduced-by-vect}Let $w$ be the ranking
vector 
\[
w=\left(\begin{array}{c}
w(a_{1})\\
\vdots\\
\vdots\\
w(a_{n})
\end{array}\right).
\]
The consistent PC matrix $C$ induced by $w$ is given as 
\[
C=\left(\begin{array}{cccc}
\frac{w(a_{1})}{w(a_{1})} & \frac{w(a_{1})}{w(a_{2})} & \cdots & \frac{w(a_{1})}{w(a_{n})}\\
\frac{w(a_{2})}{w(a_{1})} & \ddots & \cdots & \vdots\\
\vdots & \cdots & \ddots & \vdots\\
\frac{w(a_{n})}{w(a_{1})} & \cdots & \cdots & \frac{w(a_{n})}{w(a_{n})}
\end{array}\right).
\]
\end{defn}
It is easy to prove that all ranking procedures following the principle
(\ref{eq:assumption-of-incons}) produce identical priority vectors
for consistent PC matrices. 

It is accepted that the level of inconsistency is measured by inconsistency
indices. The first inconsistency index for the PC matrix $\textit{CI}$
was proposed by Saaty \citep{Saaty1977asmf}. 
\begin{defn}
\label{D4} \emph{The inconsistency index} $\textit{CI}$ of a PC
matrix $\mathbf{C}$ of the order $n$ is defined as follows 
\begin{equation}
\textit{CI}(\mathbf{C})=\frac{\lambda_{max}-n}{n-1},
\end{equation}
\end{defn}
The value $\lambda_{\textit{max}}\geq n$, and $\lambda_{\textit{max}}=n$
only if a pairwise comparison matrix is consistent \citep{Saaty1980tahp}.
Thus, when $\mathbf{C}$ is consistent then $\textit{CI}(\mathbf{C})=0$.
Defining the inconsistency index resulted in a question about the
limits above which the index reaches unacceptable values. Answering
this question, Saaty proposed consideration of the inconsistency of
a PC matrix as acceptable providing that it is ten times lower than
the $\textit{CI}$ of a totally random matrix of the same size. To
link the inconsistency index $\textit{CI}$ with random matrices,
the concept of the consistency ratio $\textit{CR}$ was introduced.
\begin{defn}
\emph{The consistency ratio} $\textit{CR}$ is defined as 
\begin{equation}
\textit{CR}(\mathbf{C})=\frac{\textit{CI}(\mathbf{C})}{\textit{RI}(n)},
\end{equation}
where $\textit{RI}(n)$ denotes \emph{the random consistency index}
dependent\footnote{In fact, $\textit{CR}$ also depends on the measurement scale, however,
in most of the cases the scale $1/9,\ldots,1,\ldots,9$ is used.} on $n$.
\end{defn}
Following the above definition, the inconsistency of $\mathbf{C}$
is considered as acceptable if $\textit{CR}(\mathbf{C})\leq0.1$.

Another inconsistency index comes from Koczkodaj \citep{Koczkodaj1993ando}.
However, while $\textit{CI}(\mathbf{C})$ measures some average inconsistency
in $\mathbf{C}$, Koczkodaj's index focuses on the inconsistencies
of individual triads $c_{ik},c_{kj},c_{ij}$ (see Def. \ref{def:consist-def}).
Following the popular saying of ``one rotten apple spoils the barrel''
it finds the maximum local inconsistency (\ref{eq:koczkodaj_idx})
and accepts it as the inconsistency of the entire matrix.
\begin{defn}
\label{D5} Koczkodaj's inconsistency index $\textit{KI}$ of an $n\times n$
and ($n>2)$ reciprocal matrix $\mathbf{C}=[c_{ij}]$ is defined as
follows: \citep{Koczkodaj1993ando}

\begin{equation}
\textit{KI}_{i,k,j}=\min\left\{ \left|1-\frac{c_{ij}}{c_{ik}c_{kj}}\right|,\left|1-\frac{c_{ik}c_{kj}}{c_{ij}}\right|\right\} \label{eq:local_inc}
\end{equation}

\begin{equation}
\textit{KI}(C)=\underset{i,j,k\in\{1,\ldots,n\}}{\max}\textit{KI}_{i,k,j}\label{eq:koczkodaj_idx}
\end{equation}
\end{defn}
\begin{rem}
It is easy to observe that $0\leq KI(\mathbf{C}=[c_{ij}])<1$. 

More information about various inconsistency indices can be found
in \citep{Brunelli2018aaoi,Kulakowski2020iifi}.
\end{rem}

\subsection{Comparing ranking data }

\subsubsection{Comparing ranking vectors }

In the era of increasing use of rankings, the ability to compare them
is a must. Music charts, the ten best films of the year, and academic
rankings of world universities (Shanghai list) are just a few examples
of the rankings we have to deal with on a daily basis. One of the
first methods for comparing rankings comes from \emph{Kendall} \citep{Kulakowski2019tqoi,Kendall1938anmo}.
This is an ordinal index of similarity indicating the number of necessary
transpositions of consecutive alternatives needed to transform one
ranking into another. We may also treat cardinal rankings as vectors
in $n$ dimensional space. Thus, the difference between ranking vectors
can be estimated using \emph{Manhattan distance}:

\begin{equation}
\textit{MD}(w_{1},w_{2})\overset{\textit{df}}{=}\sum_{i=1}^{n}\left|w_{1}(a_{i})-w_{2}(a_{i})\right|\label{eq:manhatt-dist}
\end{equation}
\citep{Kulakowski2019tqoi} or \emph{Tchebyshev distance}

\[
\textit{ChD}(w_{1},w_{2})\overset{\textit{df}}{=}\left\Vert w_{1}-w_{2}\right\Vert _{\infty}=\max_{i=1,\ldots,n}\left|w_{1}(a_{i})-w_{2}(a_{i})\right|
\]
see \citep[p. 845]{Harker1987ipci}. However, working with rankings
based on comparisons of alternatives in pairs and knowing their nature
(\ref{eq:assumption-of-incons}), we decided to propose another indicator
of their similarity (defined in Subsection \ref{subsec:Compatibility-of-ranking}).

\subsubsection{Comparing PC matrices }

When two experts evaluate the same set of alternatives, one may expect
that the results of their assessments are similar. This similarity
can be assessed by comparing PC matrices. For this purpose, we can
use the compatibility index defined by \emph{Saaty} \citep{Saaty2008rmai}. 
\begin{defn}
\label{def:compatibility-idx-matrix}Let $C_{1}=[c_{ij}^{(1)}],C_{2}=[c_{ij}^{(2)}]$
be the reciprocal PC matrices of the same size $n\times n$. The compatibility
index of $C_{1}$ and $C_{2}$ is given as 
\[
\textit{comp}(C_{1},C_{2})\overset{\textit{df}}{=}\frac{1}{n^{2}}e^{T}C_{1}\circ C_{2}^{T}e=\frac{1}{n^{2}}\sum_{i=1}^{n}\sum_{j=1}^{n}c_{ij}^{(1)}\cdot c_{ji}^{(2)},
\]
where $e^{T}=[1,\ldots,1]$. 

This index is the sum of the individual entries creating the Hadamard
product of reciprocal matrices $C_{1}=[c_{ij}]$ and $C_{2}^{T}=[c_{ji}]$
divided by their number. Hence, its value is a kind of arithmetic
average of expressions in the form $c_{ij}^{(1)}\cdot c_{ji}^{(2)}$.
It is easy to observe that $\textit{comp}(C_{1},C_{2})$ for two identical
values returns $1.$ When the matrices are different, this value increases.
\end{defn}

\section{How far apart are EVM and GMM?}

The concept of matrix compatibility (Def. \ref{def:compatibility-idx-matrix})
can easily be extended to ranking vectors. Thanks to this, we gain
an additional way to compare two ranking vectors. Later on in this
section, we will show the relationship between inconsistency measured
by \emph{Koczkodaj's Index} (Def. \ref{D5}) and the compatibility
of ranking vectors. It will also be shown that this reasoning can
be extended to classic vector distance measures, such as \emph{Manhattan
distance} (\ref{eq:manhatt-dist}).

However, before we define the compatibility index for priority vectors,
let us extend the compatibility index for matrices. One of the disadvantages
of the compatibility index (Def. \ref{def:compatibility-idx-matrix})
is that each pair $c_{ij}^{(1)}\cdot c_{ji}^{(2)}$ is compared two
times. The first time directly, the second time as its reverse $c_{ji}^{(1)}\cdot c_{ij}^{(2)}$.
Another inconvenience is the fact that pairs in the form $c_{ii}^{(1)}\cdot c_{ii}^{(2)}$
always equal $1$, hence such pairs contribute nothing to our knowledge
of the compatibility level. The last problem stems from the fact that
the compatibility index is a summation of ratios in the form $x/y$.
Thus, one may expect that every component $x/y$ contributes the same
as $y/x$ to the value of the index. Of course, that is not the case
as the greater $x$ is compared to $y$ the more important the component
$x/y$ becomes and $y/x$ becomes less important. The solution may
be to choose the larger element from each pair in the form $\left\{ c_{ij}^{(1)}\cdot c_{ji}^{(2)},c_{ji}^{(1)}\cdot c_{ij}^{(2)}\right\} $
and limit the summation to the upper triangles of both matrices. The
above considerations allow us to formulate an improved definition
of the vector compatibility index. In order to distinguish it from
the original version, we will denote it as $\overline{\textit{comp}}$.
\begin{defn}
\label{def:Compat-for-matrices-upper}Let $C_{1}=[c_{ij}^{(1)}],C_{2}=[c_{ij}^{(2)}]$
be the reciprocal PC matrices of the same size $n\times n$. The upper
compatibility index of $C_{1}$ and $C_{2}$ is given as
\begin{equation}
\overline{\textit{comp}}(C_{1},C_{2})\overset{\textit{df}}{=}\frac{2}{n(n-1)}\sum_{\substack{i,j=1\\
j>i
}
}^{n}\max\left\{ c_{ij}^{(1)}\cdot c_{ji}^{(2)},c_{ji}^{(1)}\cdot c_{ij}^{(2)}\right\} \label{eq:compatibility-matrix-upper-def}
\end{equation}
\end{defn}
Of course, replacing the maximum by minimum would also be justified.
I.e. 
\begin{defn}
\label{def:Compat-for-matrices-lower}Let $C_{1}=[c_{ij}^{(1)}],C_{2}=[c_{ij}^{(2)}]$
be the reciprocal PC matrices of the same size $n\times n$. The lower
compatibility index of $C_{1}$ and $C_{2}$ is given as
\begin{equation}
\underline{\textit{comp}}(C_{1},C_{2})\overset{\textit{df}}{=}\frac{2}{n(n-1)}\sum_{\substack{i,j=1\\
j>i
}
}^{n}\min\left\{ c_{ij}^{(1)}\cdot c_{ji}^{(2)},c_{ji}^{(1)}\cdot c_{ij}^{(2)}\right\} \label{eq:compatibility-matrix-lower-def}
\end{equation}
\end{defn}
The only difference between $\overline{\textit{comp}}$ and $\underline{\textit{comp}}$
is that in the case of $\overline{\textit{comp}}$ higher values of
the index mean higher total incompatibility, while in the case of
$\textit{\ensuremath{\underline{comp}}}$ the opposite is true. Lower
values of $\underline{\textit{comp}}$ mean higher total incompatibility
of the compared matrices. In both cases, matrices $C_{1}$ and $C_{2}$
that are fully compatible will result in $\overline{\textit{comp}}(C_{1},C_{2})=\underline{\textit{comp}}(C_{1},C_{2})=1$.

Sometimes, it can also be useful to determine the maximum local incompatibility
of two PC matrices. In order to determine this value, let us introduce
the last modification of the original compatibility index.
\begin{defn}
\label{def:Compat-for-matrices-lower-1}Let $C_{1}=[c_{ij}^{(1)}],C_{2}=[c_{ij}^{(2)}]$
be the reciprocal PC matrices of the same size $n\times n$. The maximal
compatibility index of $C_{1}$ and $C_{2}$ is given as
\begin{equation}
\textit{compmax}(C_{1},C_{2})\overset{\textit{df}}{=}\underset{i,j=1,\ldots,n}{\max}c_{ij}^{(1)}\cdot c_{ji}^{(2)}\label{eq:compatibility-matrix-lower-def-1}
\end{equation}
\end{defn}
\begin{rem}
\label{rem:dependence-between-compats}It is easy to prove that for
two PC matrices $C_{1}$ and $C_{2}$ it holds that
\[
\underline{\textit{comp}}(C_{1},C_{2})\leq\textit{comp}(C_{1},C_{2})\leq\overline{\textit{comp}}(C_{1},C_{2})\leq\textit{compmax}(C_{1},C_{2}).
\]
\end{rem}

\subsection{Compatibility of ranking vectors\label{subsec:Compatibility-of-ranking} }

Armed with a number of compatibility indices for matrices, we can
easily extend their definitions to ranking vectors.
\begin{defn}
\label{def:Compat-for-vectors}Let $w_{1}$ and $w_{2}$ be the ranking
vectors, and $C_{1}$ and $C_{2}$ be the PC matrices induced by $w_{1}$
and $w_{2}$ (see Def. \ref{def:consist-matrix-introduced-by-vect}).
The compatibility indices of $w_{1}$ and $w_{2}$, written as $\textit{comp}(w_{1},w_{2})$,
$\overline{\textit{comp}}(w_{1},w_{2})$, $\underline{\textit{comp}}(w_{1},w_{2})$
and $\textit{compmax}(w_{1},w_{2})$ are defined as $\textit{comp}(C_{1},C_{2})$,
$\overline{\textit{comp}}(C_{1},C_{2})$, $\underline{\textit{comp}}(C_{1},C_{2})$
and $\textit{compmax}(C_{1},C_{2})$ correspondingly. 
\end{defn}
\begin{rem}
Providing that $w_{1}=[w_{1}(a_{1}),\ldots,w_{1}(a_{n})]^{T}$ and
$w_{2}=[w_{2}(a_{1}),\ldots,w_{2}(a_{n})]^{T}$ and 

\[
\beta_{ij}(w_{1},w_{2})\overset{\textit{df}}{=}\frac{w_{1}(a_{i})}{w_{1}(a_{j})}\cdot\frac{w_{2}(a_{j})}{w_{2}(a_{i})},
\]
the vector compatibility indices can be written as

\begin{equation}
\textit{comp}(w_{1},w_{2})\overset{\textit{df}}{=}\frac{1}{n^{2}}\sum_{i,j=1}^{n}\beta_{ij}(w_{1},w_{2}),\label{eq:compatibility-vect-def}
\end{equation}

\begin{equation}
\overline{\textit{comp}}(w_{1},w_{2})\overset{\textit{df}}{=}\frac{2}{n(n-1)}\sum_{\substack{i=1,j=2\\
j>i
}
}^{n}\max\left\{ \beta_{ij}(w_{1},w_{2}),\beta_{ji}(w_{1},w_{2})\right\} ,\label{eq:compatibility-vect-upper-def}
\end{equation}

\begin{equation}
\textit{\ensuremath{\underline{comp}}}(w_{1},w_{2})\overset{\textit{df}}{=}\frac{2}{n(n-1)}\sum_{\substack{i=1,j=2\\
j>i
}
}^{n}\min\left\{ \beta_{ij}(w_{1},w_{2}),\beta_{ji}(w_{1},w_{2})\right\} \label{eq:compatibility-vect-lower-def}
\end{equation}

and 
\[
\textit{compmax}(w_{1},w_{2})\overset{\textit{df}}{=}\underset{i,j=1,\ldots,n}{\max}\beta_{ij}(w_{1},w_{2})
\]
\end{rem}
\begin{thm}
\label{thm:on-compatibility}For every $n\times n$ PC matrix $C$
the compatibility indices of two rankings $w_{\textit{ev}}$ and $w_{\textit{gm}}$
where the first is computed using EVM and the latter using GMM satisfy
the following inequalities:

\begin{align}
\kappa^{2} & \leq\underline{\textit{comp}}(w_{\textit{ev}},w_{\textit{gm}})\leq\textit{comp}(w_{\textit{ev}},w_{\textit{gm}})\leq\label{eq:limit-compat}\\
 & \leq\overline{\textit{comp}}(w_{\textit{ev}},w_{\textit{gm}})\leq\textit{compmax}(w_{\textit{ev}},w_{\textit{gm}})\leq\frac{1}{\kappa^{2}},\nonumber 
\end{align}

where $\kappa=1-\textit{KI}(C)$.
\end{thm}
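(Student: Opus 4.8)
The plan is to reduce the entire chain to a single uniform bound on the quantities $\beta_{ij}(w_{\textit{ev}},w_{\textit{gm}})$. Writing $e_i:=w_{\textit{ev}}(a_i)$ and $g_i:=w_{\textit{gm}}(a_i)$ for the (unnormalised) eigenvector and geometric-mean entries, I note that $\beta_{ij}=\frac{e_i/e_j}{g_i/g_j}$, so the scaling factors $\gamma$ cancel and only the ratios $e_i/e_j$ and $g_i/g_j$ matter. The two central inequalities of (\ref{eq:limit-compat}) are already supplied by Remark \ref{rem:dependence-between-compats}, so it suffices to establish $\kappa^{2}\le\underline{\textit{comp}}$ and $\textit{compmax}\le1/\kappa^{2}$. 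Both follow at once from the pointwise estimate $\kappa^{2}\le\beta_{ij}\le1/\kappa^{2}$, which I would prove for all $i,j$.

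First I would translate the Koczkodaj bound into a pointwise triad inequality. Since $\textit{KI}(C)=\max_{i,k,j}\textit{KI}_{i,k,j}\le1-\kappa$ and, by the Remark following Def. \ref{D5}, $\kappa\in(0,1]$, every triad satisfies $\min\{|1-x|,|1-1/x|\}\le1-\kappa$ with $x=c_{ir}/(c_{ij}c_{jr})$. An elementary case analysis on whether $x\ge1$ or $x<1$ then yields
\[
\kappa\le\frac{c_{ir}}{c_{ij}c_{jr}}\le\frac{1}{\kappa}\qquad\text{for all }i,j,r,
\]
which is the only consequence of the inconsistency bound that the argument uses. Writing $t_{ijr}:=c_{ir}/(c_{ij}c_{jr})\in[\kappa,1/\kappa]$, reciprocity gives $t_{iji}=t_{ijj}=1$, so the endpoint terms cause no trouble.

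Next I would bound each of $g_i/g_j$ and $e_i/e_j$ inside $[\kappa\,c_{ij},\,c_{ij}/\kappa]$. For the GMM, the defining formula (\ref{D3}) gives $g_i/g_j=\bigl(\prod_r c_{ir}/c_{jr}\bigr)^{1/n}=c_{ij}\bigl(\prod_r t_{ijr}\bigr)^{1/n}$, and a geometric mean of numbers in $[\kappa,1/\kappa]$ stays in $[\kappa,1/\kappa]$. For the EVM, the Perron eigen-equation $\lambda_{\textit{max}}e_i=\sum_r c_{ir}e_r$ gives $e_i/e_j=\bigl(\sum_r c_{ir}e_r\bigr)/\bigl(\sum_r c_{jr}e_r\bigr)=c_{ij}\cdot\bigl(\sum_r (c_{jr}e_r)\,t_{ijr}\bigr)/\bigl(\sum_r c_{jr}e_r\bigr)$, and the last factor is a convex combination of the $t_{ijr}$ with strictly positive weights $c_{jr}e_r$ (positivity of $e_r$ being Perron--Frobenius), hence again lies in $[\kappa,1/\kappa]$. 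Dividing the two estimates gives $\kappa^{2}\le\beta_{ij}\le1/\kappa^{2}$.

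Finally I would feed this into the definitions. Since $\beta_{ij}\le1/\kappa^{2}$ for every pair, $\textit{compmax}=\max_{i,j}\beta_{ij}\le1/\kappa^{2}$; and since $\min\{\beta_{ij},\beta_{ji}\}\ge\kappa^{2}$, averaging over the upper triangle gives $\underline{\textit{comp}}\ge\kappa^{2}$. Combined with Remark \ref{rem:dependence-between-compats}, this closes the chain (\ref{eq:limit-compat}). The main obstacle is the EVM ratio, because the eigenvector is only implicitly defined; the weighted-average rewriting above is what tames it, and it is worth highlighting that exactly the same geometric-mean / convex-combination mechanism governs both methods, which is why the common factor $c_{ij}$ appears in each bound and cancels in $\beta_{ij}$.
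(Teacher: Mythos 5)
Your proof is correct and takes essentially the same route as the paper's: the same pointwise triad bound $\kappa\le c_{ir}/(c_{ij}c_{jr})\le 1/\kappa$ extracted from $\textit{KI}$, the same estimates $\kappa\,c_{ij}\le w(a_i)/w(a_j)\le c_{ij}/\kappa$ for both the geometric-mean and eigenvector ratios, and the same conclusion via $\kappa^{2}\le\beta_{ij}\le 1/\kappa^{2}$ combined with Remark \ref{rem:dependence-between-compats}. Your ``geometric mean / convex combination of the $t_{ijr}$'' packaging is just a cleaner rephrasing of the paper's in-place substitution of the triad bound into the products and sums.
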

\begin{proof}
Based on the definition of $\textit{KI}$ we obtain 
\begin{equation}
\textit{KI}(C)\geq\min\left\{ \left|1-\frac{c_{ij}}{c_{ik}c_{kj}}\right|,\left|1-\frac{c_{ik}c_{kj}}{c_{ij}}\right|\right\} \label{eq:the_eq_2}
\end{equation}
This means that either:

\begin{equation}
c_{ij}\leq c_{ik}c_{kj}\,\,\text{implies}\,\,\textit{KI}(C)\geq1-\frac{c_{ij}}{c_{ik}c_{kj}}\label{eq:the_eq_3}
\end{equation}
or

\begin{equation}
c_{ik}c_{kj}\leq c_{ij}\,\,\text{implies}\,\,\textit{KI}(C)\geq1-\frac{c_{ik}c_{kj}}{c_{ij}}\label{eq:the_eq_4}
\end{equation}
is true. Let us denote $\kappa\overset{\textit{df}}{=}1-\mathscr{\textit{KI}}(C)$.
The above can then be written as: 

\begin{equation}
c_{ij}\leq c_{ik}c_{kj}\,\,\text{implies}\,\,c_{ij}\geq\alpha\cdot c_{ik}c_{kj}\label{eq:the_eq_5}
\end{equation}

\begin{equation}
c_{ik}c_{kj}\leq c_{ij}\,\,\text{implies}\,\,\frac{1}{\alpha}\cdot c_{ik}c_{kj}\geq c_{ij}\label{eq:the_eq_6}
\end{equation}

Since $\kappa\leq1$ , both cases (\ref{eq:the_eq_5}) and (\ref{eq:the_eq_6})
lead independently to the common conclusion that: 
\begin{equation}
\kappa\cdot c_{ik}c_{kj}\leq c_{ij}\leq\frac{1}{\kappa}c_{ik}c_{kj}\label{eq:triad_estim}
\end{equation}

Due to the definition of GMM ranking, the ratio of the i-th and j-th
elements of the priority vector $w_{\textit{gm}}$ equals 
\[
\frac{w_{\textit{gm}}(a_{j})}{w_{\textit{gm}}(a_{i})}=\frac{\left(\prod_{r=1}^{n}c_{jr}w_{\textit{gm}}(a_{r})\right)^{\frac{1}{n}}}{\left(\prod_{r=1}^{n}c_{ir}w_{\textit{gm}}(a_{r})\right)^{\frac{1}{n}}}.
\]

Thus, based on (\ref{eq:triad_estim}) we obtain that

\[
\frac{w_{\textit{gm}}(a_{j})}{w_{\textit{gm}}(a_{i})}\leq\frac{\left(\prod_{r=1}^{n}\frac{1}{\kappa}c_{ji}c_{ir}w_{\textit{gm}}(a_{r})\right)^{\frac{1}{n}}}{\left(\prod_{r=1}^{n}c_{ir}w_{\textit{gm}}(a_{r})\right)^{\frac{1}{n}}}=\frac{1}{\kappa}c_{ji}.
\]

For the same purpose we obtain 
\[
\kappa c_{ji}=\frac{\left(\prod_{r=1}^{n}\kappa c_{ji}c_{ir}w_{\textit{gm}}(a_{r})\right)^{\frac{1}{n}}}{\left(\prod_{r=1}^{n}c_{ir}w_{\textit{gm}}(a_{r})\right)^{\frac{1}{n}}}\leq\frac{w_{\textit{gm}}(a_{j})}{w_{\textit{gm}}(a_{i})}.
\]

In other words 
\begin{equation}
\kappa c_{ji}\leq\frac{w_{\textit{gm}}(a_{j})}{w_{\textit{gm}}(a_{i})}\leq\frac{1}{\kappa}c_{ji}.\label{eq:gm-estim}
\end{equation}

Similarly for EVM we have \citep{Kulakowski2016srot}: 
\[
\frac{w_{\textit{ev}}(a_{i})}{w_{\textit{ev}}(a_{j})}=\frac{\sum_{r=1}^{n}c_{ir}w_{\textit{ev}}(a_{r})}{\sum_{r=1}^{n}c_{jr}w_{\textit{ev}}(a_{r})}.
\]

Hence, due to (\ref{eq:triad_estim}) we have 

\[
\frac{w_{\textit{ev}}(a_{i})}{w_{\textit{ev}}(a_{j})}\leq\frac{\sum_{r=1}^{n}\frac{1}{\kappa}c_{ij}c_{jr}w_{\textit{ev}}(a_{r})}{\sum_{r=1}^{n}c_{jr}w_{\textit{ev}}(a_{r})}=\frac{1}{\kappa}c_{ij}.
\]

Once again, we can also obtain

\[
\kappa c_{ij}=\frac{\sum_{r=1}^{n}\kappa c_{ij}c_{jr}w_{\textit{ev}}(a_{r})}{\sum_{r=1}^{n}c_{jr}w_{\textit{ev}}(a_{r})}\leq\frac{w_{\textit{ev}}(a_{i})}{w_{\textit{ev}}(a_{j})}.
\]

Thus, we get 
\begin{equation}
\kappa c_{ij}\leq\frac{w_{\textit{ev}}(a_{i})}{w_{\textit{ev}}(a_{j})}\leq\frac{1}{\kappa}c_{ij}.\label{eq:ev-estim}
\end{equation}

By multiplying sidewise (\ref{eq:gm-estim}) and (\ref{eq:ev-estim})
we obtain 
\[
\kappa^{2}\leq\frac{w_{\textit{ev}}(a_{i})}{w_{\textit{ev}}(a_{j})}\cdot\frac{w_{\textit{gm}}(a_{j})}{w_{\textit{gm}}(a_{i})}\leq\frac{1}{\kappa^{2}},
\]

i.e. 
\begin{equation}
\kappa^{2}\leq\beta_{ij}(w_{\textit{ev}},w_{\textit{gm}})\leq\frac{1}{\kappa^{2}},\label{eq:local-compat-result}
\end{equation}

for any $i,j=1,\ldots,n$. 

It is easy to observe that the above induces both

\[
\kappa^{2}\leq\underline{\textit{comp}}(w_{\textit{ev}},w_{\textit{gm}}),
\]

and

\[
\textit{compmax}(w_{\textit{ev}},w_{\textit{gm}})\leq\frac{1}{\kappa^{2}}.
\]

Thus, in the light of (Rem. \ref{rem:dependence-between-compats})
the thesis (\ref{eq:limit-compat}) of the theorem is satisfied. 
\end{proof}
The above theorem shows that the compatibility of two ranking vectors
can move within the range determined by the value of inconsistency.
On the one hand, it cannot be smaller than $\kappa^{2}$. This means
that, for example, if there is non-zero inconsistency both vectors
cannot be fully compatible. In other words, as long as $C$ is inconsistent
both vectors $w_{\textit{ev}}$ and $w_{\textit{gm}}$ cannot be identical.
On the other hand, their compatibility indices cannot be greater than
$\kappa^{-2}$. This means in particular that, for relatively small
values of inconsistency, both ranking vectors are very similar.

\subsection{Local compatibility and the distance between rankings}

The result of Theorem \ref{thm:on-compatibility} can be presented
even more precisely. That is because the ranking vectors calculated
using EVM and GMM are very often rescaled so that their components
sum up to $1$. Let us write down this observation in the form of
the following lemma.
\begin{lem}
\label{lem:additional-lemma} For every $n\times n$ PC matrix $C$
and two rankings $w_{\textit{ev}}$ and $w_{\textit{gm}}$ where the
first is computed using EVM and the latter using GMM and 
\[
\sum_{i=1}^{n}w_{\textit{ev}}(a_{i})=1,\,\,\,\,\sum_{i=1}^{n}w_{\textit{gm}}(a_{i})=1,
\]
it holds that
\[
\kappa^{2}\leq\frac{w_{\textit{ev}}(a_{i})}{w_{\textit{gm}}(a_{j})}\leq\frac{1}{\kappa^{2}},
\]
for $i,j=1,.\ldots,n$, where $\kappa=1-\textit{KI}(C)$.
\end{lem}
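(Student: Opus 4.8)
The plan is to obtain the lemma as a direct consequence of the pointwise bound already produced inside the proof of Theorem~\ref{thm:on-compatibility}. Equation~(\ref{eq:local-compat-result}) states that for all indices
\[
\kappa^{2}\leq\beta_{ij}(w_{\textit{ev}},w_{\textit{gm}})=\frac{w_{\textit{ev}}(a_{i})\,w_{\textit{gm}}(a_{j})}{w_{\textit{ev}}(a_{j})\,w_{\textit{gm}}(a_{i})}\leq\frac{1}{\kappa^{2}}.
\]
Because $\beta_{ij}$ is a ratio of ratios, it is unchanged if either vector is rescaled, so the whole chain holds verbatim for the unit-sum vectors assumed in the lemma. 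The governing idea is that this ratio-of-ratios estimate turns into a bilinear inequality once denominators are cleared, and a bilinear inequality in the weights can be summed over a free index so that the normalisations $\sum_{j}w_{\textit{ev}}(a_{j})=\sum_{j}w_{\textit{gm}}(a_{j})=1$ make the sums collapse to single weights.

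First I would clear denominators in the right-hand inequality to get, for every $j$,
\[
w_{\textit{ev}}(a_{i})\,w_{\textit{gm}}(a_{j})\leq\tfrac{1}{\kappa^{2}}\,w_{\textit{ev}}(a_{j})\,w_{\textit{gm}}(a_{i}).
\]
Fixing $i$ and summing over $j=1,\dots,n$ produces $w_{\textit{ev}}(a_{i})\sum_{j}w_{\textit{gm}}(a_{j})$ on the left and $\tfrac{1}{\kappa^{2}}w_{\textit{gm}}(a_{i})\sum_{j}w_{\textit{ev}}(a_{j})$ on the right; both sums equal $1$, so $w_{\textit{ev}}(a_{i})\leq\tfrac{1}{\kappa^{2}}w_{\textit{gm}}(a_{i})$.

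The left-hand inequality is handled symmetrically. Clearing denominators yields $\kappa^{2}w_{\textit{ev}}(a_{j})w_{\textit{gm}}(a_{i})\leq w_{\textit{ev}}(a_{i})w_{\textit{gm}}(a_{j})$, and fixing $i$ and summing over $j$ with the same two normalisations gives $\kappa^{2}w_{\textit{gm}}(a_{i})\leq w_{\textit{ev}}(a_{i})$. Putting the two halves together yields, for each $i$,
\[
\kappa^{2}\leq\frac{w_{\textit{ev}}(a_{i})}{w_{\textit{gm}}(a_{i})}\leq\frac{1}{\kappa^{2}},
\]
the estimate coupling the EVM and GMM weight of each alternative, which is the refinement of Theorem~\ref{thm:on-compatibility} announced in the text.

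The single delicate point is the bookkeeping of the summation index: one must sum $\beta_{ij}$ over the index sitting in its denominator, not over the index being estimated, and invoke both unit-sum conditions simultaneously, so that each side of the summed bilinear inequality telescopes to exactly one weight. Once this pairing of the summation index with the normalisation is arranged the remaining manipulations are elementary, and the scale-invariance of $\beta_{ij}$ guarantees that assuming the normalisation up front costs nothing.
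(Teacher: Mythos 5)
Your proof is correct and is essentially the paper's own argument: both fix $i$, sum the pointwise bound (\ref{eq:local-compat-result}) over the free index $j$, and use the two unit-sum normalisations to collapse each side to a single weight, the only cosmetic difference being that you clear denominators into a bilinear inequality before summing while the paper multiplies through by $w_{\textit{ev}}(a_{j})$ and divides by $w_{\textit{gm}}(a_{i})$ after summing. Note that, exactly like the paper's proof, yours establishes the equal-index bound $\kappa^{2}\leq w_{\textit{ev}}(a_{i})/w_{\textit{gm}}(a_{i})\leq\kappa^{-2}$, which is evidently what the lemma intends (and what Theorem \ref{thm:manhattan-distance-theorem} uses) despite the statement's ``$i,j$'' phrasing, since the off-diagonal version is false already for consistent matrices.
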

\begin{proof}
According to (\ref{eq:local-compat-result}) it holds that 
\[
\kappa^{2}\leq\frac{w_{\textit{ev}}(a_{i})}{w_{\textit{ev}}(a_{j})}\cdot\frac{w_{\textit{gm}}(a_{j})}{w_{\textit{gm}}(a_{i})}\leq\frac{1}{\kappa^{2}}.
\]
Thus 
\[
w_{\textit{ev}}(a_{1})\kappa^{2}\leq\frac{w_{\textit{ev}}(a_{i})}{1}\cdot\frac{w_{\textit{gm}}(a_{1})}{w_{\textit{gm}}(a_{i})}\leq\frac{1}{\kappa^{2}}w_{\textit{ev}}(a_{1})
\]

\[
w_{\textit{ev}}(a_{2})\kappa^{2}\leq\frac{w_{\textit{ev}}(a_{i})}{1}\cdot\frac{w_{\textit{gm}}(a_{2})}{w_{\textit{gm}}(a_{i})}\leq\frac{1}{\kappa^{2}}w_{\textit{ev}}(a_{2})
\]

\[
\vdots
\]

\[
w_{\textit{ev}}(a_{n})\kappa^{2}\leq\frac{w_{\textit{ev}}(a_{i})}{1}\cdot\frac{w_{\textit{gm}}(a_{n})}{w_{\textit{gm}}(a_{i})}\leq\frac{1}{\kappa^{2}}w_{\textit{ev}}(a_{n})
\]
By summing the above $n$ equalities on the left and right hand sides,
we obtain

\[
\kappa^{2}\sum_{j=1}^{n}w_{\textit{ev}}(a_{j})\leq\sum_{j=1}^{n}\frac{w_{\textit{ev}}(a_{i})}{1}\cdot\frac{w_{\textit{gm}}(a_{j})}{w_{\textit{gm}}(a_{i})}\leq\frac{1}{\kappa^{2}}\sum_{j=1}^{n}w_{\textit{ev}}(a_{j}),
\]
hence

\[
\kappa^{2}\sum_{j=1}^{n}w_{\textit{ev}}(a_{j})\leq\frac{w_{\textit{ev}}(a_{i})}{w_{\textit{gm}}(a_{i})}\sum_{j=1}^{n}w_{\textit{gm}}(a_{j})\leq\frac{1}{\kappa^{2}}\sum_{j=1}^{n}w_{\textit{ev}}(a_{j}).
\]
Since $\sum_{j=1}^{n}w_{\textit{ev}}(a_{j})=\sum_{j=1}^{n}w_{\textit{gm}}(a_{j})=1$,
then the above reduces to
\[
\kappa^{2}\leq\frac{w_{\textit{ev}}(a_{i})}{w_{\textit{gm}}(a_{i})}\leq\frac{1}{\kappa^{2}}.
\]
Of course, it also holds that 
\[
\kappa^{2}\leq\frac{w_{\textit{gm}}(a_{i})}{w_{\textit{ev}}(a_{i})}\leq\frac{1}{\kappa^{2}}.
\]
\end{proof}
The above lemma allows us to estimate the Manhattan distance measure
between the EVM and GMM ranking vectors using the inconsistency index
$\textit{KI}$.
\begin{thm}
\label{thm:manhattan-distance-theorem}For every $n\times n$ PC matrix
$C$ and two rankings $w_{\textit{ev}}$ and $w_{\textit{gm}}$ where
the first is computed using EVM and the latter using GMM and 
\[
\sum_{i=1}^{n}w_{\textit{ev}}(a_{i})=1,\,\,\,\,\sum_{i=1}^{n}w_{\textit{gm}}(a_{i})=1,
\]
it holds that
\begin{equation}
n\left(\kappa^{2}-1\right)\leq\textit{MD}(w_{\textit{ev}},w_{\textit{gm}})\leq n\left(\frac{1}{\kappa^{2}}-1\right),\label{eq:limits-for-MD}
\end{equation}
 where $\kappa=1-\textit{KI}(C)$.
\end{thm}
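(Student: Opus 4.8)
The plan is to reduce the whole statement to the componentwise estimate supplied by Lemma~\ref{lem:additional-lemma}, since by definition $\textit{MD}(w_{\textit{ev}},w_{\textit{gm}})=\sum_{i=1}^{n}|w_{\textit{ev}}(a_{i})-w_{\textit{gm}}(a_{i})|$ is nothing but the sum of the $n$ componentwise absolute deviations. First I would take the bound $\kappa^{2}\leq w_{\textit{ev}}(a_{i})/w_{\textit{gm}}(a_{i})\leq 1/\kappa^{2}$ established in the proof of that lemma, multiply through by the positive number $w_{\textit{gm}}(a_{i})$, and subtract $w_{\textit{gm}}(a_{i})$, obtaining $(\kappa^{2}-1)\,w_{\textit{gm}}(a_{i})\leq w_{\textit{ev}}(a_{i})-w_{\textit{gm}}(a_{i})\leq(1/\kappa^{2}-1)\,w_{\textit{gm}}(a_{i})$ for each $i$.

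Next I would pass to absolute values. Because $0<\kappa\leq 1$ we have $\kappa^{2}-1\leq 0$ and $1/\kappa^{2}-1\geq 0$, so the signed deviation lies in an interval straddling the origin whose endpoints have magnitudes $(1-\kappa^{2})\,w_{\textit{gm}}(a_{i})$ and $(1/\kappa^{2}-1)\,w_{\textit{gm}}(a_{i})$. The elementary inequality $1/\kappa^{2}-1\geq 1-\kappa^{2}$ (which follows from $1/\kappa^{2}-1=(1-\kappa^{2})/\kappa^{2}$ together with $1-\kappa^{2}\geq 0$ and $1/\kappa^{2}\geq 1$) shows that the positive endpoint dominates in magnitude, whence $|w_{\textit{ev}}(a_{i})-w_{\textit{gm}}(a_{i})|\leq(1/\kappa^{2}-1)\,w_{\textit{gm}}(a_{i})$ for every $i$.

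Finally, since the vectors are normalized to sum to $1$ and have positive entries, $0\leq w_{\textit{gm}}(a_{i})\leq 1$, so I can discard this weight factor to reach the uniform per-component bound $|w_{\textit{ev}}(a_{i})-w_{\textit{gm}}(a_{i})|\leq 1/\kappa^{2}-1$. Summing over the $n$ indices yields $\textit{MD}(w_{\textit{ev}},w_{\textit{gm}})\leq n(1/\kappa^{2}-1)$, the claimed upper bound. The lower bound requires nothing further: $\textit{MD}$ is a sum of absolute values, hence nonnegative, whereas $n(\kappa^{2}-1)\leq 0$ because $\kappa^{2}\leq 1$; therefore $n(\kappa^{2}-1)\leq 0\leq\textit{MD}(w_{\textit{ev}},w_{\textit{gm}})$.

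The only step that needs genuine attention is the passage to absolute values, where one must verify that $1/\kappa^{2}-1$ rather than $1-\kappa^{2}$ governs the deviation; everything else is bookkeeping, and the lower bound is essentially free. I would also remark, as an aside, that summing the sharper pointwise estimate $|w_{\textit{ev}}(a_{i})-w_{\textit{gm}}(a_{i})|\leq(1/\kappa^{2}-1)\,w_{\textit{gm}}(a_{i})$ \emph{before} discarding the factor $w_{\textit{gm}}(a_{i})$ actually gives the stronger conclusion $\textit{MD}(w_{\textit{ev}},w_{\textit{gm}})\leq 1/\kappa^{2}-1$; the factor $n$ in the stated inequality is simply the price paid for replacing each weight $w_{\textit{gm}}(a_{i})$ by its crude upper bound $1$.
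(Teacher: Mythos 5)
Your proof is correct and takes essentially the same route as the paper's: both reduce the claim to the componentwise estimate $\kappa^{2}w_{\textit{gm}}(a_{i})\leq w_{\textit{ev}}(a_{i})\leq\kappa^{-2}w_{\textit{gm}}(a_{i})$ supplied by Lemma~\ref{lem:additional-lemma}, dispose of the absolute value by noting $1-\kappa^{2}\leq 1/\kappa^{2}-1$, and sum over $i$, the lower bound being vacuous in both arguments since $n(\kappa^{2}-1)\leq 0\leq\textit{MD}(w_{\textit{ev}},w_{\textit{gm}})$. Your closing remark is also accurate and worth making explicit: the paper sums the weighted bound against $\sum_{i}w_{\textit{gm}}(a_{i})=1$, which actually yields the sharper estimate $\textit{MD}(w_{\textit{ev}},w_{\textit{gm}})\leq\kappa^{-2}-1$ before the factor $n$ appears, so the $n$ in (\ref{eq:limits-for-MD}) is pure slack --- introduced in your argument by the crude bound $w_{\textit{gm}}(a_{i})\leq1$ and in the paper by the unexplained final weakening of its own inequality.
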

\begin{proof}
Based on Lemma \ref{lem:additional-lemma} we obtain 
\begin{equation}
\kappa^{2}w_{\textit{gm}}(a_{i})\leq w_{\textit{ev}}(a_{i})\leq\frac{1}{\kappa^{2}}w_{\textit{gm}}(a_{i}).\label{eq:lemma-conq}
\end{equation}
We use the above inequality to determine the maximal distance between
$w_{\textit{ev}}(a_{i})$ and $w_{\textit{gm}}(a_{i})$. Let us consider
the value $d_{i}=\left|w_{\textit{ev}}(a_{i})-w_{\textit{gm}}(a_{i})\right|$
in terms of two cases: $w_{\textit{ev}}(a_{i})\leq w_{\textit{gm}}(a_{i})$
and $w_{\textit{ev}}(a_{i})\geq w_{\textit{gm}}(a_{i})$. If the first
situation holds, the greater $d_{i}$ is, the smaller $w_{\textit{ev}}(a_{i})$.
Hence, due to (\ref{eq:lemma-conq}), an upper bound for $d_{i}$
is $w_{\textit{gm}}(a_{i})-\kappa^{2}w_{\textit{gm}}(a_{i})=w_{\textit{gm}}(a_{i})\left(1-\kappa^{2}\right)$.
In the second case, the greater $d_{i}$ is, the higher $w_{\textit{ev}}(a_{i})$.
Thus, an upper bound for $d_{i}$ is $\frac{1}{\kappa^{2}}w_{\textit{gm}}(a_{i})-w_{\textit{gm}}(a_{i})=w_{\textit{gm}}(a_{i})\left(1/\kappa^{2}-1\right)$.
The above considerations lead to the conclusion that:
\[
\left|w_{\textit{ev}}(a_{i})-w_{\textit{gm}}(a_{i})\right|\leq\max\left\{ w_{\textit{gm}}(a_{i})\left(1-\kappa^{2}\right),w_{\textit{gm}}(a_{i})\left(\frac{1}{\kappa^{2}}-1\right)\right\} .
\]
Similarly, it holds that 
\[
\min\left\{ w_{\textit{gm}}(a_{i})\left(1-\kappa^{2}\right),w_{\textit{gm}}(a_{i})\left(\frac{1}{\kappa^{2}}-1\right)\right\} \leq\left|w_{\textit{ev}}(a_{i})-w_{\textit{gm}}(a_{i})\right|.
\]
It is easy to observe that for $x\in]0,1]$ the function $1-x$ is
not greater than $1/x-1$. Since $0<\kappa^{2}\leq1$, thus
\begin{equation}
w_{\textit{gm}}(a_{i})\left(\kappa^{2}-1\right)\leq\left|w_{\textit{ev}}(a_{i})-w_{\textit{gm}}(a_{i})\right|\leq w_{\textit{gm}}(a_{i})\left(\frac{1}{\kappa^{2}}-1\right),\label{eq:Tchebyshev-support}
\end{equation}
for any $i=1,\ldots,n$. This simply implies that 
\[
\sum_{i=1}^{n}\left(w_{\textit{gm}}(a_{i})\left(\kappa^{2}-1\right)\right)\leq\sum_{i=1}^{n}\left|w_{\textit{ev}}(a_{i})-w_{\textit{gm}}(a_{i})\right|\leq\sum_{i=1}^{n}\left(w_{\textit{gm}}(a_{i})\left(\frac{1}{\kappa^{2}}-1\right)\right).
\]
Since $\sum_{i=1}^{n}w_{\textit{gm}}(a_{i})=1$, thus 
\[
n\left(\kappa^{2}-1\right)\leq\sum_{i=1}^{n}\left|w_{\textit{ev}}(a_{i})-w_{\textit{gm}}(a_{i})\right|\leq n\left(\frac{1}{\kappa^{2}}-1\right),
\]
i.e.

\[
n\left(\kappa^{2}-1\right)\leq\textit{MD}(w_{\textit{ev}},w_{\textit{gm}})\leq n\left(\frac{1}{\kappa^{2}}-1\right),
\]
which is the desired conclusion.
\end{proof}
\begin{cor}
\label{cor:expected-value-observation}The fact that the Manhattan
distance between $w_{\textit{ev}}$ and $w_{\textit{gm}}$ is bounded
by $n\left(\kappa^{2}-1\right)$ and $n\left(1/\kappa^{2}-1\right)$
implies that the expected value of a distance between priorities of
the i-th alternative measured using EVM and GMM is bounded by $\kappa^{2}-1$
and $1/\kappa^{2}-1$. I.e. 
\[
\kappa^{2}-1\leq\frac{1}{n}\textit{MD}(w_{\textit{ev}},w_{\textit{gm}})\leq\frac{1}{\kappa^{2}}-1.
\]
\end{cor}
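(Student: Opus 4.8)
The plan is to obtain this statement by simply rescaling the two-sided estimate already proved in Theorem \ref{thm:manhattan-distance-theorem}; no new analytic work is required. First I would recall the chain
\[
n\left(\kappa^{2}-1\right)\leq\textit{MD}(w_{\textit{ev}},w_{\textit{gm}})\leq n\left(\frac{1}{\kappa^{2}}-1\right)
\]
guaranteed by that theorem under the same normalisation hypotheses $\sum_{i=1}^{n}w_{\textit{ev}}(a_{i})=\sum_{i=1}^{n}w_{\textit{gm}}(a_{i})=1$. Since $n$ is a positive integer, dividing each of the three members of this chain by $n$ preserves the inequalities and yields exactly
\[
\kappa^{2}-1\leq\frac{1}{n}\textit{MD}(w_{\textit{ev}},w_{\textit{gm}})\leq\frac{1}{\kappa^{2}}-1,
\]
which is the desired conclusion.

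The only point that merits a sentence of justification is the probabilistic reading advertised in the statement, namely that $\frac{1}{n}\textit{MD}(w_{\textit{ev}},w_{\textit{gm}})$ is an expected value. Here I would unfold the definition (\ref{eq:manhatt-dist}) to write
\[
\frac{1}{n}\textit{MD}(w_{\textit{ev}},w_{\textit{gm}})=\frac{1}{n}\sum_{i=1}^{n}\left|w_{\textit{ev}}(a_{i})-w_{\textit{gm}}(a_{i})\right|,
\]
and then observe that the right-hand side is precisely the arithmetic mean of the per-alternative discrepancies $d_{i}=\left|w_{\textit{ev}}(a_{i})-w_{\textit{gm}}(a_{i})\right|$. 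Equivalently, if the index $i$ is drawn uniformly at random from $\{1,\ldots,n\}$, this quantity is the expectation of $d_{i}$, which is what justifies calling the middle term the expected value of the distance between the priorities of a single alternative computed by the two methods.

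Because the corollary is a direct consequence of the theorem, I do not anticipate any genuine obstacle: the work reduces to a division by a positive constant together with the remark that averaging an absolute difference over a uniformly chosen index is an expectation. The one thing I would be careful to preserve is that the normalisation hypotheses of Theorem \ref{thm:manhattan-distance-theorem} are inherited unchanged in the present setting, so that the bound being rescaled is legitimately available.
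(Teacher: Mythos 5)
Your proposal is correct and matches the paper's treatment: the corollary is stated there without a separate proof precisely because it follows from Theorem \ref{thm:manhattan-distance-theorem} by dividing the chain of inequalities by the positive integer $n$, exactly as you do. Your additional remark that $\frac{1}{n}\textit{MD}(w_{\textit{ev}},w_{\textit{gm}})$ is the expectation of $\left|w_{\textit{ev}}(a_{i})-w_{\textit{gm}}(a_{i})\right|$ under a uniformly random choice of $i$ is the same reading the paper intends and is a worthwhile sentence of justification, not a departure from its argument.
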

\begin{cor}
\label{cor:Tchebyshev-corol}The proof of Theorem \ref{thm:manhattan-distance-theorem}
also reveals that $\textit{KI}$ also upper bounds the \emph{Tchebyshev}
distance between priority vectors. Indeed, due to (\ref{eq:Tchebyshev-support})
and $w_{\textit{gm}}(a_{i})\leq1$ for $i=1,\ldots,n$ it holds that
\[
\left\Vert w_{\textit{ev}}(a_{i})-w_{\textit{gm}}(a_{i})\right\Vert _{\infty}=\max_{i=1,\ldots,n}\left|w_{\textit{ev}}(a_{i})-w_{\textit{gm}}(a_{i})\right|\leq\frac{1}{\kappa^{2}}-1.
\]
\end{cor}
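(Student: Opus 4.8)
The plan is to piggyback entirely on the pointwise estimate already produced inside the proof of Theorem \ref{thm:manhattan-distance-theorem}, namely inequality (\ref{eq:Tchebyshev-support}), and to upgrade its right-hand side from a bound weighted by $w_{\textit{gm}}(a_{i})$ into a single index-independent constant. First I would isolate the upper half of (\ref{eq:Tchebyshev-support}),
\[
\left|w_{\textit{ev}}(a_{i})-w_{\textit{gm}}(a_{i})\right|\leq w_{\textit{gm}}(a_{i})\left(\frac{1}{\kappa^{2}}-1\right),
\]
which holds for every $i=1,\ldots,n$.

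The key observation is that the factor $\frac{1}{\kappa^{2}}-1$ is non-negative. Since $\textit{KI}(C)$ satisfies $0\leq\textit{KI}(C)<1$, the quantity $\kappa=1-\textit{KI}(C)$ obeys $0<\kappa\leq1$, hence $\kappa^{2}\leq1$ and $\frac{1}{\kappa^{2}}\geq1$. Multiplying a non-negative number by $w_{\textit{gm}}(a_{i})$ can therefore only decrease it, provided $w_{\textit{gm}}(a_{i})\leq1$. This last bound is immediate from the normalization hypothesis carried over from Lemma \ref{lem:additional-lemma}: the components of $w_{\textit{gm}}$ are non-negative and sum to $1$, so no single component can exceed $1$.

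Combining the two facts, for each $i$ I would chain
\[
\left|w_{\textit{ev}}(a_{i})-w_{\textit{gm}}(a_{i})\right|\leq w_{\textit{gm}}(a_{i})\left(\frac{1}{\kappa^{2}}-1\right)\leq\frac{1}{\kappa^{2}}-1,
\]
and then take the maximum over $i=1,\ldots,n$ on the left-hand side, which yields exactly
\[
\left\Vert w_{\textit{ev}}(a_{i})-w_{\textit{gm}}(a_{i})\right\Vert _{\infty}=\max_{i=1,\ldots,n}\left|w_{\textit{ev}}(a_{i})-w_{\textit{gm}}(a_{i})\right|\leq\frac{1}{\kappa^{2}}-1.
\]
Honestly, there is no genuine obstacle here: the argument is a one-line consequence of a bound already proved. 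The only two points deserving explicit mention are the sign of $\frac{1}{\kappa^{2}}-1$ (so that dropping $w_{\textit{gm}}(a_{i})\leq1$ weakens rather than breaks the inequality) and the elementary estimate $w_{\textit{gm}}(a_{i})\leq1$ itself, both of which follow directly from the range of $\textit{KI}$ and the normalization of the ranking vectors.
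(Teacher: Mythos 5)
Your argument is exactly the paper's: the corollary is justified there by citing the upper half of (\ref{eq:Tchebyshev-support}) together with $w_{\textit{gm}}(a_{i})\leq1$, which is precisely the chain you wrote out. You merely make explicit the two small facts the paper leaves implicit — that $\frac{1}{\kappa^{2}}-1\geq0$ and that $w_{\textit{gm}}(a_{i})\leq1$ follows from the normalization — so the proposal is correct and essentially identical in approach.
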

Theorem \ref{thm:manhattan-distance-theorem} together with Corollary
\ref{cor:Tchebyshev-corol} provide formal proof of the observation
that the less inconsistent the PC matrices, the more similar the priority
vectors \citep{Herman1996amcs}. The fact that when the inconsistency
is small the priority vectors obtained by EVM and GMM are not far
from each other provides another argument for keeping the inconsistency
as low as possible. Indeed, when an inconsistency is small, the choice
of the ranking method is less critical. Hence, low inconsistency helps
to avoid the result being questioned due to the selection of the ranking
method.

The inequalities (\ref{eq:limit-compat}) and \ref{eq:limits-for-MD}
clearly indicate that as $\textit{KI}$ decreases to $0$ ($\kappa$
increases to $1$) the distance between the ranking vectors $w_{\textit{ev}}$
and $w_{\textit{gm}}$ tends to $0$. The lower bound for the compatibility
index (Theorem \ref{thm:on-compatibility}) and Manhattan distance
(Theorem \ref{thm:manhattan-distance-theorem}) shows that when there
is even a little inconsistency in the PC matrix both the ranking vectors
$w_{\textit{ev}}$ and $w_{\textit{gm}}$ cannot be identical.

\section{Numerical experiment}

\subsection{Experiment settings }

The Montecarlo experiment carried out shows how vectors computed using
EVM and GMM for the same PC matrix differ (on average) at different
levels of inconsistency. For the purposes of testing, the distances
between vectors were calculated using the compatibility index (\ref{eq:compatibility-vect-def})
and Manhattan distance (\ref{eq:manhatt-dist}) regarding the matrix
inconsistency determined using the Saaty and Koczkodaj indices. In
order to carry out the experiment, random PC matrices were used. Each
random matrix $R_{d}=[r_{ij}]$ was obtained from the consistent matrix
$C=[c_{ij}]$ induced by the randomly created ranking vector $w$
by multiplying $c_{ij}$ by the random factor $r\in[1/d,d]$, where
$d$ is a certain disturbance level. Of course, the random matrices
preserve reciprocity i.e. $r_{ij}=1/r_{ji}$ holds for every $R_{d}$.
Thanks to this construction, the newly created matrices $R_{d}$ are
consistent for $d=1$, and with the increase of $d>1$ become more
and more randomly inconsistent.

For the purpose of the experiment, we prepared $901\,000$ random
matrices, and $1000$ matrices for each disturbance level $d=1,1.01,1.02,\ldots9.99,10$.
For each matrix we calculated Saaty's and Koczkodaj's inconsistency
indices and several vector distance indicators including Manhattan
distance and the compatibility index.

\subsection{Obtained results}

Fig. \ref{fig:results} shows the relationships between Saaty's and
Koczkodaj's inconsistency indicators and distances between the EVM
and GMM ranking vectors. Of course, for zero inconsistency, both vectors
are identical. As inconsistency increases, the difference between
these vectors also increases. In each of the cases, however, a clear
upper limitation of this increase is visible (the sets of points are
clearly limited from the top). The lower bound, much less pronounced,
can also be noticed.

\begin{figure}
\begin{centering}
\subfloat{\centering{}\includegraphics[width=6cm]{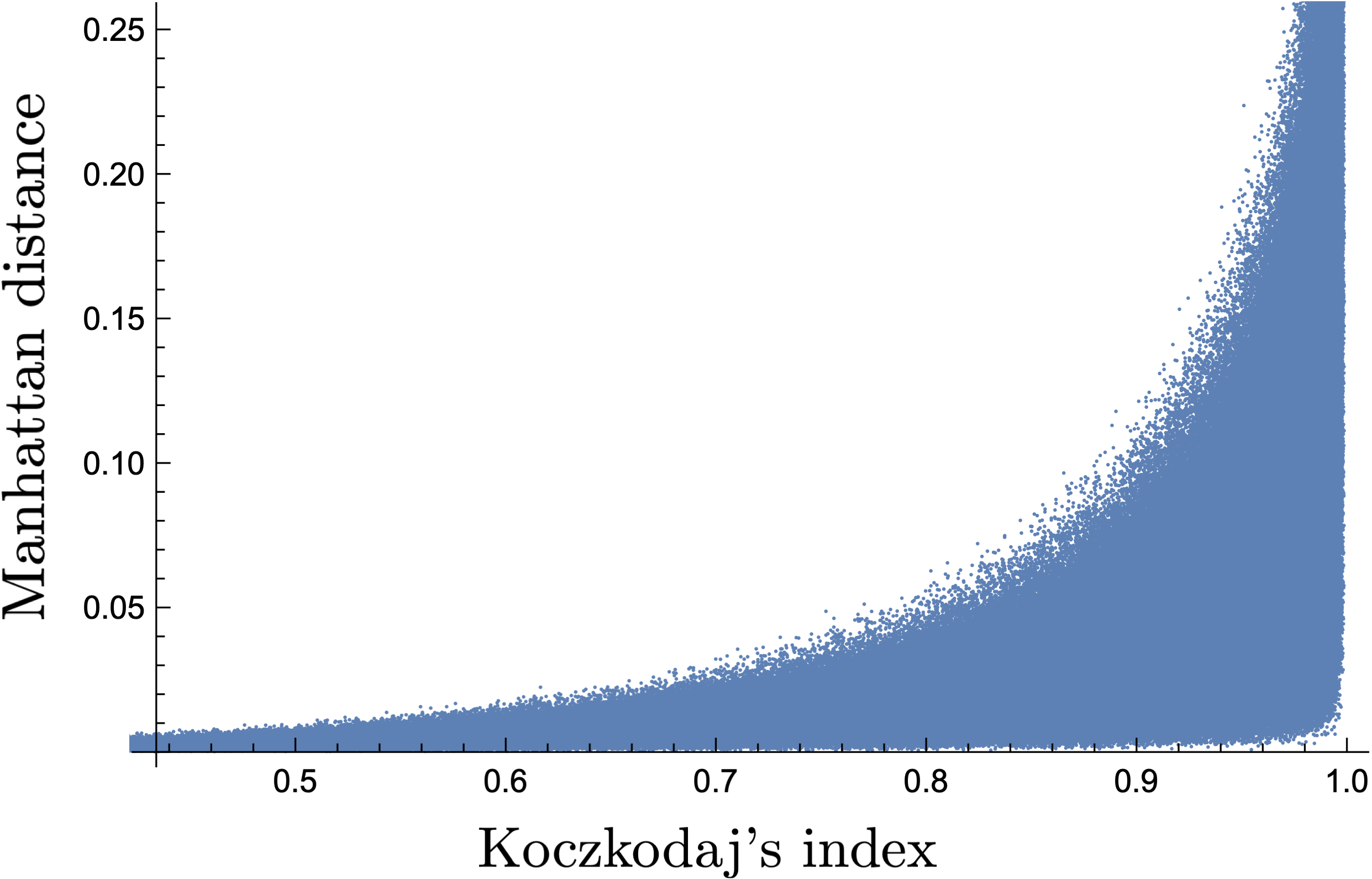}}~~\subfloat{\centering{}\includegraphics[width=6cm]{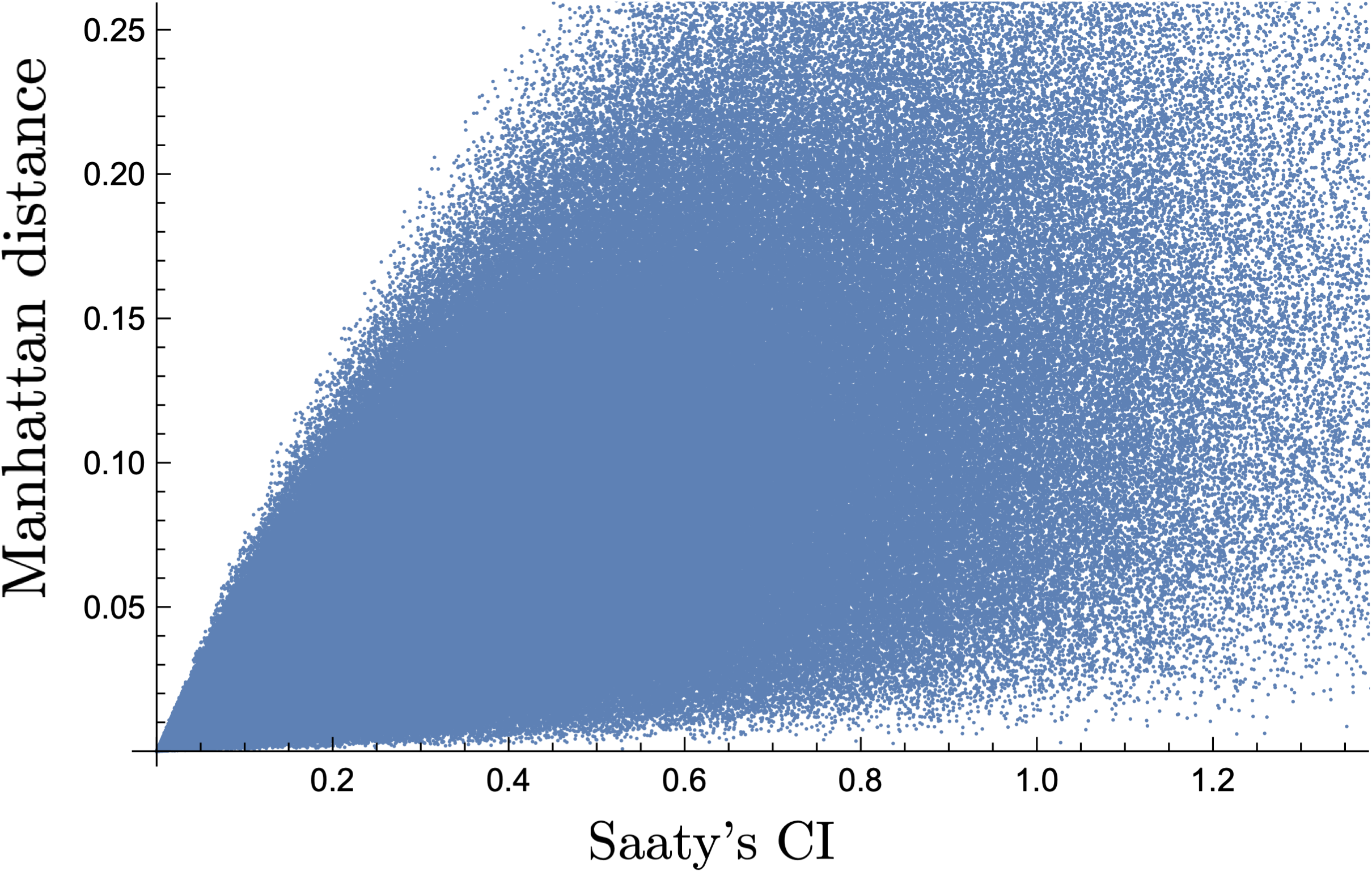}}
\par\end{centering}
\begin{centering}
\subfloat{\centering{}\includegraphics[width=6cm]{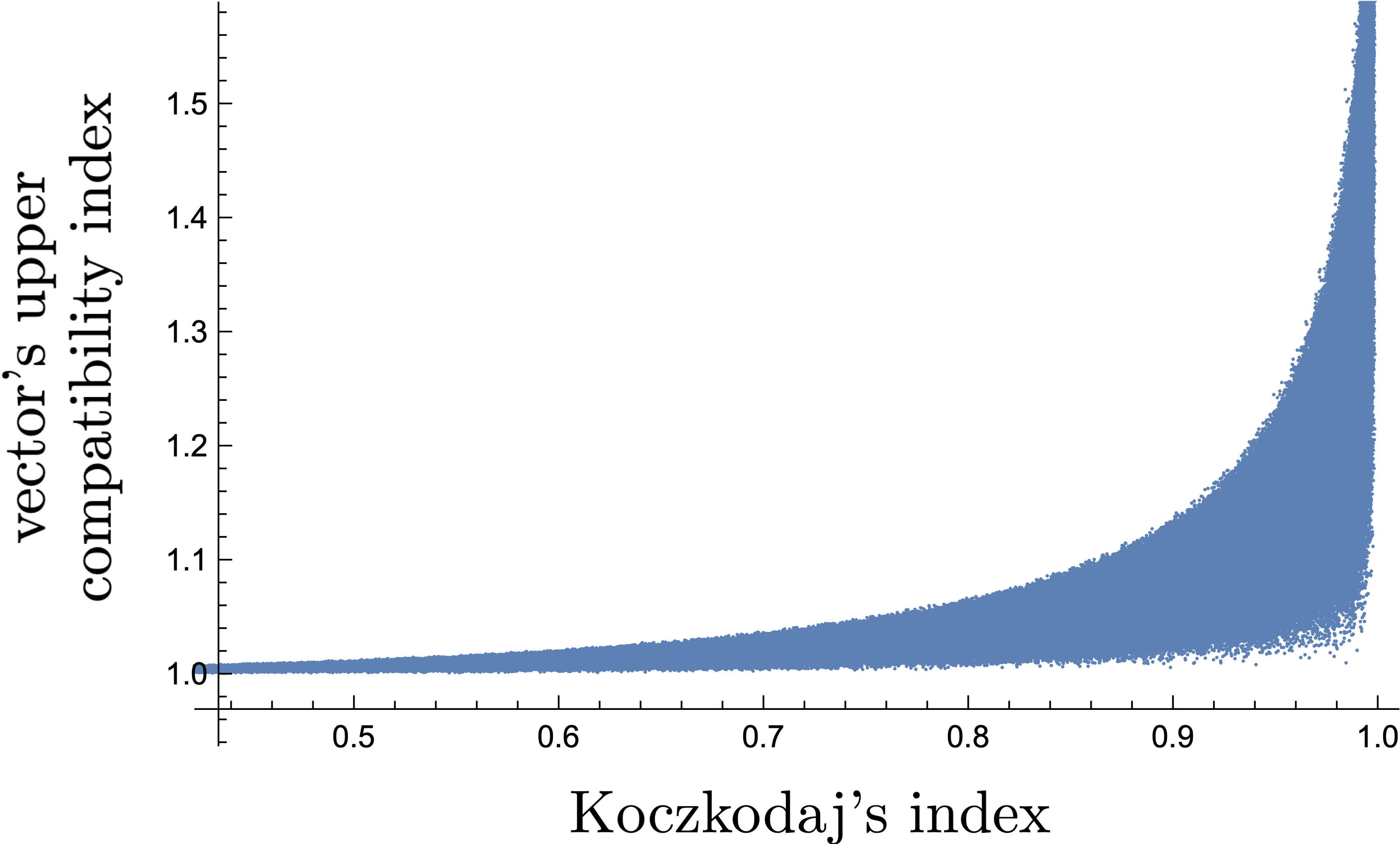}}~~\subfloat{\centering{}\includegraphics[width=6cm]{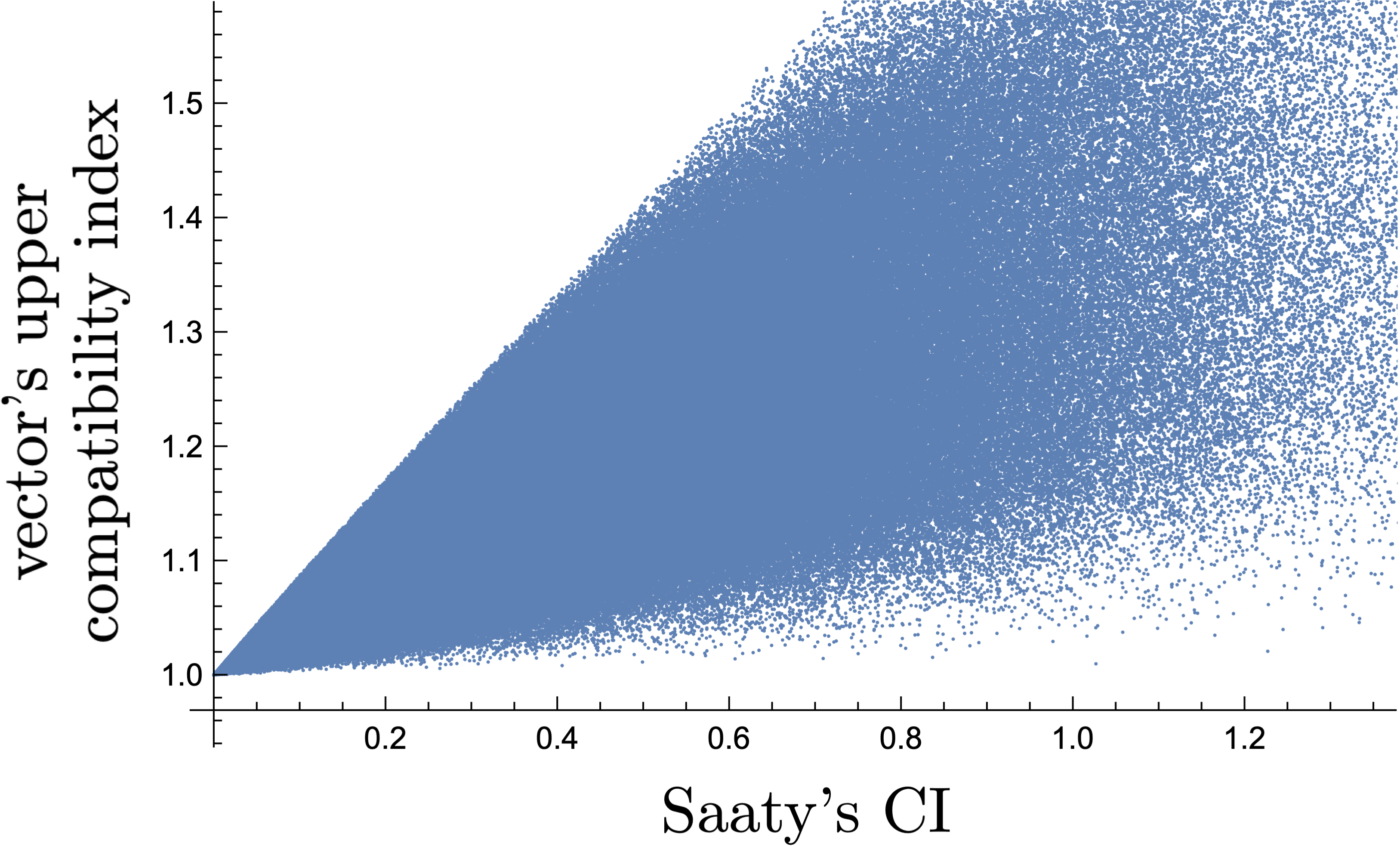}}
\par\end{centering}
\caption{Manhattan distance and the vector's upper compatibility index between
the $\textit{EVM}$ and $\textit{GMM}$ ranking vectors obtained from
the same PC matrix with a given inconsistency level.}

\label{fig:results}
\end{figure}

The existence of these limitations is in line with the theoretical
results shown above. An attentive reader will also notice that the
theoretical maximum limitations resulting from Theorems \ref{thm:on-compatibility},
\ref{lem:additional-lemma} and \ref{thm:manhattan-distance-theorem}
are significantly stronger than the experimental results would suggest.
This may mean that there are better estimates of the distance between
the two ranking vectors representing the two most popular ranking
algorithms for AHP. Trying to find them will pose a further challenge
for researchers.

\section{Summary}

In the presented work, we returned to the matter of comparing the
two main priority deriving methods in AHP. We proved the formal relationship
of the distance between the ranking vectors calculated using the EVM
and GMM methods with inconsistency.

In the presented assertions (Theorems \ref{thm:on-compatibility},
\ref{lem:additional-lemma} and \ref{thm:manhattan-distance-theorem},
Corollaries \ref{cor:expected-value-observation} and \ref{cor:Tchebyshev-corol}),
to determine the inconsistency, we used Saaty's and Koczkodaj's indices,
while the difference between the vectors was determined using \emph{Manhattan}
distance, \emph{Tchebyshev} distance and compatibility indices (\ref{eq:compatibility-vect-def}).
In order to compare the compatibility of two vectors, we adapted the
concept of the matrix compatibility (Def. \ref{def:compatibility-idx-matrix})
and defined four new compatibility indices. Theorem \ref{thm:on-compatibility}
shows the relationship between the newly defined indices.

It is worth emphasizing that the perspective adopted in this article
is purely quantitative. We do not deal with the qualitative comparison
of ranking vectors, i.e. we do not study the order of alternatives.
Instead, the proportions between the priorities of the alternatives
are important to us. This approach is justified when the ranking is
of quantitative importance, i.e. when the prize does not go only to
the winner, but is distributed proportionally to the priority value
among the competition participants. In such a situation, for small
values of inconsistency indices, the differences between the EVM and
GMM approaches are also small. Hence, from the perspective of a participant
in the quantitative ranking, in a situation of low inconsistency,
the choice of the priority deriving method is not so important. However,
with higher values of inconsistency, it starts to matter. This result
highlights once again how important it is to keep the inconsistency
low.

\section*{Literature}

\bibliographystyle{plainnat}
\bibliography{papers_biblio_reviewed}

\end{document}